
\documentclass[jphysa,amsmath]{iopart}

\usepackage{graphicx}
\usepackage{amsthm}
\usepackage{amsfonts}
\usepackage{amssymb}
\usepackage{braket}
\usepackage{times,txfonts}
\usepackage{subfigure}
\usepackage{hyperref}

\newtheorem{thm}{Theorem}
\newtheorem{prop}{Proposition}
\newtheorem{lemma}{Lemma}

\theoremstyle{definition}

\newcommand{\bq}{\begin{equation*}}
\newcommand{\be}{\begin{equation}}
\newcommand{\eq}{\end{equation*}}
\newcommand{\ee}{\end{equation}}
\newcommand{\bmu}{\begin{multline*}}
\newcommand{\emu}{\end{multline*}}
\newcommand{\ban}{\begin{align*}}
\newcommand{\bal}{\begin{align}}
\newcommand{\ean}{\end{align*}}
\newcommand{\eal}{\end{align}}

\begin{document}
\letter{}

\title{Strong subadditivity for log-determinant of covariance matrices and its applications}

\author{Gerardo Adesso}
\address{{School of Mathematical Sciences, The University of Nottingham, \\
University Park, Nottingham NG7 2RD, United Kingdom}\\ Email: \href{mailto:gerardo.adesso@nottingham.ac.uk}{gerardo.adesso@nottingham.ac.uk}}

\author{R. Simon}
\address{{Optics \& Quantum Information Group,
The Institute of Mathematical Sciences, \\ C.I.T.~Campus, Tharamani, Chennai 600 113, India}\\ Email: \href{mailto:simon@imsc.res.in}{simon@imsc.res.in}}

\begin{abstract}
We prove that the log-determinant of the covariance matrix obeys the strong subadditivity inequality for arbitrary tripartite states of multimode continuous variable quantum systems. This establishes general limitations on the distribution of information encoded in the second moments of canonically conjugate operators. The inequality is shown to be stronger than the conventional strong subadditivity inequality for von Neumann entropy in a class of pure tripartite Gaussian states. We finally show that such an inequality implies a strict monogamy-type constraint for joint Einstein-Podolsky-Rosen steerability of single modes by Gaussian measurements performed on multiple groups of modes.
\end{abstract}

\pacs{03.67.Mn, 03.65.Ta, 03.65.Ud, 42.50.Dv}
\date{June 13, 2016}

\bigskip

\bigskip 

\section{Introduction}
The formulation of classical information theory,  thanks primarily to the seminal work by Shannon \cite{Shannon}, led to a remarkably broad spectrum of concrete applications in the last century, encompassing in particular systems theory, signal processing, communication and control, complexity and cybernetics. The more recent and still ongoing developments in quantum information theory \cite{Wilde} have opened the way for even more exciting and unprecedented scenarios in the processing of information, with quantum technologies well in the course of revolutionising industrial sectors such as data storage, encryption, sensing, learning, and computing \cite{chuaniels}.

While classical and quantum theory radically differ in the basic set of rules determining the possible and the impossible for the manipulation of information, the two theories rest on some common formal pillars with far-reaching physical implications. Crucial in both cases is in fact the concept of entropy $\mathcal{H}$ as quantifier of information (or, more precisely, of uncertainty), respectively formalised as Shannon entropy $\mathcal{H}_X = - \sum_{i} P(x_i) \log P(x_i)$ for a classical random variable $X$ taking values $\{x_i\}$ with probability distribution $P(x_i)$, and as von Neumann entropy $\mathcal{H}_{\rho} = - {\rm tr} (\rho \log \rho)$ for a quantum state $\rho$.\footnote{Logarithms are usually assumed in base $2$ for finite-dimensional systems and in natural base for infinite-dimensional systems; however, the analysis of this paper does not depend on any specific choice.}

A fundamental limitation for the distribution of entropy in a composite system is then established by the {\it strong subadditivity (SSA) inequality} \cite{ArakiLieb}, which implies the nonnegativity of the mutual information as a measure of total correlations, and  guarantees that the latter quantity never increases upon discarding subsystems.  For a tripartite classical or quantum system $ABC$, this can be formally expressed as
\begin{equation}\label{SSA}
\mathcal{H}_{AB} + \mathcal{H}_{BC} - \mathcal{H}_{A} - \mathcal{H}_{C} \geq 0\,.
\end{equation}
The SSA inequality is straightforward to prove in the classical case, but far less trivial to establish in the quantum case \cite{QuantumSSA,QuantumSSA2}.

 In this Letter we prove that an alternative quantifier of information that can be defined in the quantum case, namely the log-determinant of the covariance matrix of a quantum state, also obeys a SSA inequality formally analogous to Eq.~(\ref{SSA}). We prove that this alternative SSA inequality is stronger than and implies the traditional SSA for the von Neumann entropy, in a class of pure tripartite Gaussian states. We then show that the SSA inequality for log-determinant has important implications for limiting the Einstein-Podolsky-Rosen (EPR) joint steerability \cite{Wiseman} of quantum states in a multipartite setting.

\section{Continuous variable systems and log-determinant of covariance matrices}

  We focus on continuous variable composite quantum systems described by infinite-dimensional Hilbert spaces \cite{Brareview}, as exemplified by a set of $n$ quantum harmonic oscillators (modes).  To describe the most general state $\rho$ of such systems, one requires in principle an infinite hierarchy of moments of the canonically conjugate quadrature operators $\{q_j, p_j\}$ defined on each mode $j=1,\ldots,n$. However, in many practical situations, one can extract already valuable information by considering the first and second moments of the state only. Of these, the first moments play no role in determining any informational quantity, as they can be freely adjusted by local phase space displacements; we shall hence assume vanishing first moments in all the states considered in the following with no loss of generality. What remains central is then the covariance matrix (CM) $V_\rho$, whose elements are defined as \cite{ournewreview}
\begin{equation}\label{Vij}
(V_\rho)_{jk} = {\rm tr} [\rho (R_j R_k + R_k R_j)]
\,,\end{equation}
where $\underline{R}=\{R_1,\ldots, R_{2n}\} = \{q_1,p_1,q_2,p_2,\ldots,q_n,p_n\}$ is the vector of the canonical operators, and we have adopted the natural unit convention such that $V_\rho = \mathbb{I}$ if $\rho$ is the ground (vacuum) state of each oscillator.
Any positive definite, real, symmetric $2n \times 2n$ matrix $V$ is a valid CM of a physical state iff it obeys the {\it bona fide} condition stemming from the uncertainty principle \cite{Simon94}:
\begin{equation}\label{bonafide}
V + i \sigma^{\oplus n} \geq 0\,, \quad \mbox{with $\sigma = \left(
                                                               \begin{array}{cc}
                                                                 0 & 1 \\
                                                                 -1 & 0 \\
                                                               \end{array}
                                                             \right)$}\,.
\end{equation}

The CM of an arbitrary state can be reconstructed efficiently by homodyne detections \cite{Francesi,Virginia}. Confining the description of a state $\rho$ to its CM $V_\rho$ is analogous to implementing a  `small oscillations' approximation for classical oscillators. In the quantum case, for any $\rho$ (with vanishing first moments), one can always define a reference state $\chi_\rho$ uniquely specified by the CM $V_\rho$: the state $\chi_\rho$ will belong to the well-studied class of Gaussian states \cite{pirandolareview,ournewreview},  which are central resources in continuous variable optical and atomic technologies including networked communication, phase estimation and (if supplemented by non-Gaussian detections) one-way quantum computation \cite{Libro,pirandolareview}. Useful sufficient criteria to detect nonclassical correlations such as inseparability \cite{Simon00,Duan00}, steerability \cite{Reid,Wiseman} and nonlocality \cite{Cerf,Nha} of an arbitrary state $\rho$ can be formulated and accessed directly at the level of its CM, although they will typically be necessary only for Gaussian states. In fact, one can quantify the `error' in approximating a state $\rho$ by its CM in terms of the non-Gaussianity of $\rho$, which can be in turn measured by the difference in entropy \cite{Genoni,Wehrl} between the reference Gaussian state $\chi_\rho$ and the original $\rho$.

Quantifying the degree of information (or uncertainty) in a CM $V_\rho$ can thus provide important indications regarding the corresponding properties of any state $\rho$ compatible with such CM, which will be the more accurate the less $\rho$ deviates from Gaussianity. In this respect,  notice that most non-Gaussian resources considered in current protocols (such as photon-subtracted and photon-added states) \cite{Fabio} are constructed as deviations from Gaussian reference states, which means that precious quantitative indications on their degrees of information and (for composite systems) correlations can be gained from the CM alone \cite{Carles,Adesso}. From now on, we shall then speak directly of CMs and measures applied to them.

We define the log-determinant of a CM $V$ as
\begin{equation}\label{logdet}
\mathcal{M}_V = \log (\det V)\,.
\end{equation}
The idea that $\mathcal{M}_V$ may be regarded as an indicator of information akin to (but different from) conventional entropy can be understood as follows. For a Gaussian state $\rho$, the purity is given by ${\rm tr}\,\rho^2=(\det V_\rho)^{-1/2}$, hence  $\mathcal{M}_{V_\rho}$ is a monotonically decreasing function of the purity. Precisely, $\frac12 \mathcal{M}_{V_\rho}$ is equal to the R\'enyi entropy of order $2$ of a Gaussian state with CM $V_\rho$, which is in turn equal to the Shannon entropy of its Wigner quasi-probability distribution (modulo an additive constant) \cite{Renyi}. For a general non-Gaussian state $\rho$ with CM $V_\rho$, we can then interpret $\mathcal{M}_{V_{\rho}}$ as a quantifier of uncertainty in its second moments, expressed by (twice) the R\'enyi entropy of order $2$ of the reference Gaussian state $\chi_\rho$ with the same CM $V_\rho$.

\section{Strong subadditivity for log-determinant and related inequalities}

Given an arbitrary $n$-mode continuous variable system partitioned into three groups of modes forming subsystems $ABC$, with $n_A+n_B+n_C=n$, we denote by $V_\alpha$ and $\mathcal{M}_\alpha$ the CM of (sub)system $\alpha$ and its log-determinant, respectively. In the following, we shall establish the central result of this Letter, announced by the next Theorem.
\begin{thm}[SSA inequality for log-determinant of CMs]\label{T1}
For any tripartite CM $V_{ABC}$, the following inequality holds,
\begin{equation}\label{SSALog}
\mathcal{M}_{AB} + \mathcal{M}_{BC} - \mathcal{M}_{A} - \mathcal{M}_{C} \geq 0\,,
\end{equation}
which by comparison with Eq.~(\ref{SSA}) will be referred to as the  SSA inequality for the log-determinant of the CM.
\end{thm}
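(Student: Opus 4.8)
The plan is to turn \eqref{SSALog} into a determinant inequality and to localise the one step where the uncertainty principle \eqref{bonafide} is indispensable. Exponentiating, \eqref{SSALog} reads $\det V_{AB}\,\det V_{BC}\ge \det V_A\,\det V_C$. Since the bona fide condition forces every symplectic eigenvalue of a genuine CM to be $\ge 1$, each reduced block $V_A,V_C$ is strictly positive definite and hence invertible, so I may Schur-complement out the blocks $A$ and $C$. Using $\det V_{AB}=\det V_A\,\det\tilde V_{B|A}$ with the conditional CM $\tilde V_{B|A}=V_B-V_{BA}V_A^{-1}V_{AB}$, and likewise $\det V_{BC}=\det V_C\,\det\tilde V_{B|C}$ with $\tilde V_{B|C}=V_B-V_{BC}V_C^{-1}V_{CB}$, the claim collapses to the single product bound
\bq
\det\tilde V_{B|A}\,\cdot\,\det\tilde V_{B|C}\ \ge\ 1,
\eq
that is, the two conditional log-determinants of the shared block $B$ must sum to a nonnegative number.

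Before attacking this, I would stress what makes it nontrivial. The analogous statement for an arbitrary positive-definite $V$ (the classical/Gaussian differential-entropy version) is simply false: already for a $3\times 3$ covariance of three scalar variables one can choose strong $A$--$B$ and $B$--$C$ correlations so that $\det V_{AB}\,\det V_{BC}<\det V_A\,\det V_C$. Hence the result cannot follow from Hadamard--Fischer/Koteljanskii positivity alone, and the symplectic constraint \eqref{bonafide} must enter essentially. One should also note that the two factors cannot be bounded separately: for a two-mode squeezed CM the Schur complement $\tilde V_{B|A}$ has symplectic eigenvalue $1/\cosh 2r<1$, so $\det\tilde V_{B|A}<1$; only the product is protected, which signals that \eqref{bonafide} must be used for the reduced CMs $V_{AB}$ and $V_{BC}$ jointly.

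The hard part is therefore to convert the two bona fide conditions $V_{AB}+i\sigma^{\oplus(n_A+n_B)}\ge 0$ and $V_{BC}+i\sigma^{\oplus(n_B+n_C)}\ge 0$ into a complementary uncertainty relation between $\tilde V_{B|A}$ and $\tilde V_{B|C}$. My approach would be to take the complex Schur complements of these two Hermitian positive matrices onto the block $B$, giving $G_A:=(V_B+i\sigma^{\oplus n_B})-V_{BA}(V_A+i\sigma^{\oplus n_A})^{-1}V_{AB}\succeq 0$ and the analogous $G_C\succeq 0$, and to combine $G_A\succeq 0$ with $\overline{G_C}\succeq 0$ so as to dominate the real conditional CMs while the imaginary ($\sigma_B$) parts supply the missing positivity. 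Concretely I would aim to establish an operator inequality of the complementary form $\tilde V_{B|A}\succeq \sigma^{\oplus n_B}\,\tilde V_{B|C}^{-1}\,(\sigma^{\oplus n_B})^{T}$, which upon taking determinants (using $\det\sigma^{\oplus n_B}=1$) yields the product bound at once. To keep this manageable I would first reduce to $n_B=1$: strong subadditivity telescopes in the central block, since adding the single-mode instances for the partitions $(AB_1,\,B_2,\,C)$ and $(A,\,B_1,\,B_2C)$ reproduces the inequality for $B=B_1B_2$, and iterating peels $B$ down to one mode. For $n_B=1$ the target reduces to the scalar statement $\sqrt{\det\tilde V_{B|A}}\,\sqrt{\det\tilde V_{B|C}}\ge 1$ relating the two single-mode conditional symplectic eigenvalues, which is exactly the uncertainty tradeoff underlying EPR steering and is where I expect the decisive use of \eqref{bonafide} to crystallise.
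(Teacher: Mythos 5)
Your preparatory reductions are all correct: the Schur-complement factorisation turns Eq.~(\ref{SSALog}) into $\det\bar{V}_{AB\backslash A}\cdot\det\bar{V}_{BC\backslash C}\geq 1$; your telescoping argument (summing the instances for $(AB_1,B_2,C)$ and $(A,B_1,B_2C)$) validly reduces the general case to $n_B=1$; and your observation that the inequality fails for generic positive-definite matrices---so the \emph{bona fide} condition (\ref{bonafide}) must enter essentially---is true and well made (a $3\times 3$ classical covariance with unit diagonal and nonzero $A$--$B$, $B$--$C$ correlations gives $(1-r^2)(1-s^2)<1$). The problem is that the proposal stops exactly where the theorem begins. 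The decisive step---extracting from $V_{AB}+i\sigma^{\oplus(n_A+n_B)}\geq 0$ and $V_{BC}+i\sigma^{\oplus(n_B+n_C)}\geq 0$ either the operator bound $\bar{V}_{AB\backslash A}\succeq \sigma^{\oplus n_B}\,\bar{V}_{BC\backslash C}^{-1}\,(\sigma^{\oplus n_B})^{T}$ or merely its determinant consequence---is never derived; it is only announced (``I would aim to establish\dots'', ``where I expect the decisive use of (\ref{bonafide}) to crystallise''). That bound \emph{is} the theorem: everything preceding it is bookkeeping. Nor is it an off-the-shelf fact: the complex Schur complement $G_A$ you introduce does \emph{not} equal $\bar{V}_{AB\backslash A}+i\sigma^{\oplus n_B}$, because $(V_A+i\sigma^{\oplus n_A})^{-1}$ has a nontrivial real part different from $V_A^{-1}$, so passing from $G_A\succeq 0$ and $\overline{G_C}\succeq 0$ to a statement about the \emph{real} conditional CMs requires genuine work. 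Worse, the scalar $n_B=1$ statement you fall back on is precisely the multimode generalisation of Reid's steering-monogamy bound, which this paper obtains as a \emph{corollary} of Theorem~\ref{T1} (Theorem~\ref{theosteer}); citing it as ``the uncertainty tradeoff underlying EPR steering'' without an independent derivation (e.g.\ a He--Reid-type Heisenberg argument for the commuting inference operators $q_B-\sum_i g_i q_{A_i}$ and $p_B-\sum_j h_j p_{C_j}$, combined with the identification of $\det\bar{V}_{AB\backslash A}$ as the optimised inference product) makes the logic circular.

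For contrast, the paper's proof needs none of this machinery and uses a completely different mechanism: by Lemma~\ref{lem2}, $\mathcal{M}_{AB}-\mathcal{M}_A$ and $\mathcal{M}_{BC}-\mathcal{M}_C$ are each concave in $V_{ABC}$, hence so is their sum; a concave function on the convex set of \emph{bona fide} CMs attains its minimum at the extreme points of that set, which are the symplectic CMs with $\det V_{ABC}=1$ (pure Gaussian states); and on those the expression vanishes identically, since purity gives $\det V_{AB}=\det V_C$ and $\det V_{BC}=\det V_A$. There the uncertainty principle enters once, through the characterisation of the extreme points. Your intended route---an operator strengthening of SSA via Schur complements of $V+i\sigma$---is in fact viable and is essentially the programme later carried out in the work cited as \cite{Lami}, but as written your text is a research plan with the central inequality left unproven, not a proof.
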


Before moving to the proof of the main Theorem, it is instructive to give a simple demonstration of the fact that ordinary (weak) subadditivity holds for log-determinant of CMs.
\begin{prop}[Subadditivity for log-determinant of CMs]\label{P1}
For any bipartite CM $V_{AB}$, the following inequality holds,
\begin{equation}\label{WSALog}
\mathcal{M}_{AB} \leq \mathcal{M}_A + \mathcal{M}_B\,.
 \end{equation}
\end{prop}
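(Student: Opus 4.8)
The plan is to prove the subadditivity inequality $\mathcal{M}_{AB} \leq \mathcal{M}_A + \mathcal{M}_B$ by appealing to the classical Fischer-type determinant inequality for positive definite block matrices. Writing the bipartite covariance matrix in block form as
$$V_{AB} = \begin{pmatrix} V_A & C \\ C^{\mathsf{T}} & V_B \end{pmatrix},$$
where $V_A$ and $V_B$ are the diagonal blocks corresponding to the reduced covariance matrices of subsystems $A$ and $B$, and $C$ encodes the intermode correlations. Since $V_{AB}$ is a valid covariance matrix it is real, symmetric and positive definite by assumption, hence so are both diagonal blocks. The inequality I want is precisely the statement that taking the logarithm of Fischer's inequality yields the claim.

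Let me recall what I'd need. Fischer's inequality states that for a positive definite matrix partitioned into blocks as above, one has $\det V_{AB} \leq \det V_A \cdot \det V_B$. The cleanest way to see this is via the Schur complement: writing $S = V_B - C^{\mathsf{T}} V_A^{-1} C$ for the Schur complement of $V_A$, positive definiteness of $V_{AB}$ gives both $V_A > 0$ and $S > 0$, together with the determinant factorization $\det V_{AB} = \det V_A \cdot \det S$. It then suffices to show $\det S \leq \det V_B$. This follows because $S = V_B - C^{\mathsf{T}} V_A^{-1} C \leq V_B$ in the matrix ordering (as $C^{\mathsf{T}} V_A^{-1} C \geq 0$), and the determinant is monotone on positive definite matrices under the Loewner order, so $0 < S \leq V_B$ forces $\det S \leq \det V_B$.

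Once Fischer's inequality is in hand, the result is immediate: taking logarithms on both sides of $\det V_{AB} \leq \det V_A \cdot \det V_B$ gives
$$\mathcal{M}_{AB} = \log(\det V_{AB}) \leq \log(\det V_A) + \log(\det V_B) = \mathcal{M}_A + \mathcal{M}_B,$$
using that the logarithm is monotone increasing. This completes the proof.

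I do not anticipate a genuine obstacle here, since Proposition 1 is a direct consequence of a standard determinant inequality and the point of stating it separately is pedagogical — it motivates and previews the structure of the harder Theorem 1. If anything, the only item worth being careful about is verifying that each diagonal block is itself a bona fide covariance matrix (so that the $\mathcal{M}_\alpha$ are well defined and the matrices are positive definite), but this is guaranteed since any principal submatrix of a positive definite matrix is positive definite. The real difficulty is reserved for the strong subadditivity statement, where the correlations between the overlapping subsystem $B$ and its complements cannot be decoupled by a single Schur complement, and a more refined argument exploiting the symplectic structure or a Schur-complement chain will be required.
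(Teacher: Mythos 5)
Your proposal is correct and follows essentially the same route as the paper's own proof: both reduce the claim to the Fischer-type inequality $\det V_{AB} \leq \det V_A \det V_B$ via the Schur complement $V_B - C^{\mathsf{T}} V_A^{-1} C$, using positivity of $C^{\mathsf{T}} V_A^{-1} C$ and monotonicity of the determinant, then take logarithms. The only cosmetic difference is that the paper writes out the congruence factorization $V_{AB} = L^{T}\,(V_A \oplus \bar{V}_{AB\backslash A})\,L$ explicitly, whereas you invoke the determinant factorization $\det V_{AB} = \det V_A \cdot \det S$ directly; the content is identical.
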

\begin{proof}Let $V_{AB}$ be the CM of a bipartite state, with reduced subsystem CMs $V_A$ and $V_B$. In block form, we can write
\begin{equation}\label{blockcm}
V_{AB} = \left(
           \begin{array}{cc}
             V_A & V_{\rm off} \\
             V_{\rm off}^T & V_B \\
           \end{array}
         \right) = L^T            \left(\begin{array}{cc}
             V_A & 0\\
             0 & \bar{V}_{AB\backslash A} \\
           \end{array}
         \right) L\,,
\end{equation}
where \begin{equation}\label{sciuri}
\bar{V}_{AB\backslash A} = V_B - V_{\rm off}^T V_A^{-1} V_{\rm off}
\end{equation} is the Schur complement of $V_A$ in $V_{AB}$, and
$L=\left(
     \begin{array}{cc}
       \mathbb{I} & V_A^{-1} V_{\rm off} \\
       0 & \mathbb{I} \\
     \end{array}
   \right)
$.
Since $V_{\rm off}^T V_A^{-1} V_{\rm off} \geq 0$, we have that $\det \bar{V}_{AB\backslash A} \leq \det V_B$, with equality holding iff $V_{\rm off}=0$. It follows then that $\det V_{AB} \leq \det V_A \det V_B$, which upon taking logarithms implies the claim.
\end{proof}
The inequality (\ref{WSALog}) is analogous to the ordinary subadditivity of entropy, $\mathcal{H}_{AB} \leq \mathcal{H}_A + \mathcal{H}_B\,.$
 A Gaussian state saturates the inequality (\ref{WSALog}) iff it is a product state, but non-Gaussian states can saturate it even if they are not product states, provided their CM takes the direct sum form $V_{AB} = V_A \oplus V_B$. One such example is the non-Gaussian entangled state $\ket{\psi_{AB}}=\ket{00}/\sqrt2 + (\ket{02} +\ket{20})/2$, whose correlations are all in higher order moments \cite{Carles}.

The validity of ordinary subadditivity for log-determinant prompts us to proceed and tackle the proof of the SSA inequality (\ref{SSALog}) announced in Theorem~\ref{T1}. To this end, we make use of two mathematical ingredients.

\begin{lemma}\label{lem1}
The log-determinant is concave over the set of all positive definite matrices.
\end{lemma}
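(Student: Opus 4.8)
The plan is to prove that $\mathcal{M}_V = \log \det V$ is concave on the cone of positive definite matrices, i.e.\ that for $V_0, V_1 > 0$ and $t \in [0,1]$,
\begin{equation*}
\log \det\!\big((1-t)V_0 + t V_1\big) \geq (1-t)\log\det V_0 + t \log\det V_1\,.
\end{equation*}
The cleanest route I would take is the classical one based on the \emph{monotonicity of eigenvalues under congruence}. First I would reduce the problem to a one-dimensional statement along the segment $V_t = (1-t)V_0 + t V_1$. Writing $H = V_1 - V_0$ (a symmetric matrix), I want to show that $f(t) = \log \det(V_0 + tH)$ is concave as a function of $t$ on the interval where $V_0 + tH > 0$.

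The key computation is to differentiate $f$ twice. Using the standard identities $\frac{d}{dt}\log\det(M(t)) = \Tr\!\big(M(t)^{-1}M'(t)\big)$, I get $f'(t) = \Tr\!\big((V_0+tH)^{-1}H\big)$, and differentiating once more (via $\frac{d}{dt}M^{-1} = -M^{-1}M'M^{-1}$) yields
\begin{equation*}
f''(t) = -\Tr\!\big((V_0+tH)^{-1}H(V_0+tH)^{-1}H\big)\,.
\end{equation*}
To see this is nonpositive, I would write $W = (V_0 + tH)^{-1} > 0$ and $W^{1/2}$ its positive square root, so that
\begin{equation*}
f''(t) = -\Tr\!\big(WHWH\big) = -\Tr\!\big(W^{1/2}HW^{1/2}\,W^{1/2}HW^{1/2}\big) = -\Tr\big(S^2\big) \leq 0\,,
\end{equation*}
where $S = W^{1/2}HW^{1/2}$ is symmetric, so $\Tr(S^2) = \sum_i \lambda_i(S)^2 \geq 0$. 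Since $f'' \leq 0$ on the segment, $f$ is concave, which is precisely the claimed inequality; concavity over the whole convex cone follows because it holds along every line segment.

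I expect no genuine obstacle here, as the result is standard; the only point requiring a little care is the justification of the matrix-calculus differentiation formulas and the fact that $V_0 + tH$ stays positive definite along the entire segment $[0,1]$ (which holds because the positive definite cone is convex, so each $V_t > 0$ and all quantities are well defined). An equally valid and perhaps more self-contained alternative, which I would mention as a fallback, is to invoke the Schur-complement factorisation already used in Proposition~\ref{P1} together with the concavity of $\log\det$ on scalar/block diagonal pieces, or simply to cite the strict concavity of $\log\det$ as a textbook fact; but the two-line Hessian argument above is the most direct and transparent.
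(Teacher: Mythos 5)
Your proof is correct, but it takes a genuinely different route from the paper. The paper's proof is a one-liner that defers the real work to a citation: it invokes the classical determinant inequality of information theory (Ky Fan's inequality, in the form given by Cover and Thomas), namely $\det\bigl(\sum_{j} \lambda_j V_j\bigr) \geq \prod_{j} (\det V_j)^{\lambda_j}$ for positive definite $V_j$ and probabilities $\lambda_j$, and simply takes logarithms to read off concavity. You instead give a self-contained second-order argument: restrict to a line segment $V_0 + tH$, compute $f''(t) = -\Tr\bigl(WHWH\bigr)$ with $W = (V_0+tH)^{-1}$, and use the symmetrisation $S = W^{1/2}HW^{1/2}$ together with cyclicity of the trace to conclude $f''(t) = -\Tr(S^2) \leq 0$. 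All steps check out: the convexity of the positive definite cone keeps $V_t > 0$ along the segment, the matrix-calculus identities you use are standard, and concavity along every segment is equivalent to concavity on the convex set. What your approach buys is independence from an external reference and, as a bonus, strict concavity: since $W^{1/2}$ is invertible, $\Tr(S^2) = 0$ forces $H = 0$, so $f'' < 0$ along any nontrivial direction. What the paper's approach buys is brevity and the multi-point (arbitrary finite convex combination) form of the inequality in one stroke, which your two-point version recovers only via the standard equivalence between midpoint-type and general concavity. Either argument serves the role the Lemma plays in the proof of Theorem~\ref{T1}.
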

\begin{proof}
This follows from a well known result of classical information theory \cite{logconcave}. Let $V_1, V_2,\ldots, V_l$ be $m \times m$ positive definite matrices; and let $\lambda_1, \lambda_2, \ldots, \lambda_l$ be a set of probabilities, $\lambda_j \geq 0, \sum_j \lambda_j = 1$. Then, $\det \left(\sum_{j=1}^l \lambda_j V_j\right) \geq \prod_{j=1}^l (\det V_j)^{\lambda_j}$. Taking logarithms we obtain the desired concavity property, $
\log \det\left({\sum_j \lambda_j V_j}\right) \geq \sum_j \lambda_j \log\det{V_j}$.
\end{proof}

Lemma~\ref{lem1} establishes that concavity, the primary property of entropy, holds for the log-determinant of CMs. The next auxiliary result we need is as follows.
\begin{lemma}\label{lem2}
The difference $\mathcal{M}_{AB} - \mathcal{M}_{A}$ is concave over the set of all bipartite CMs $V_{AB}$.
\end{lemma}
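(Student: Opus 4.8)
The plan is to express the difference $\mathcal{M}_{AB}-\mathcal{M}_A$ directly through the Schur complement and then realise it as the composition of a Löwner-concave matrix map with the scalar function $\log\det$, which I already know (Lemma~\ref{lem1}) to be concave and which is moreover monotone in the Löwner order. First I would recall the factorisation of the determinant implicit in Eqs.~(\ref{blockcm})--(\ref{sciuri}), namely $\det V_{AB}=\det V_A\cdot\det\bar V_{AB\backslash A}$, so that
\begin{equation*}
\mathcal{M}_{AB}-\mathcal{M}_A=\log\det\bar V_{AB\backslash A},\qquad \bar V_{AB\backslash A}=V_B-V_{\rm off}^T V_A^{-1}V_{\rm off}.
\end{equation*}
Concavity of the difference over bipartite CMs is therefore equivalent to concavity of the map $V_{AB}\mapsto\log\det\bar V_{AB\backslash A}$.

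The crux is to establish that the Schur-complement map $V_{AB}\mapsto\bar V_{AB\backslash A}$ is concave with respect to the Löwner partial order. I would do this via the variational representation
\begin{equation*}
\bar V_{AB\backslash A}=\min_{X}\,\begin{pmatrix}X^T & \mathbb{I}\end{pmatrix}\,V_{AB}\,\begin{pmatrix}X \\ \mathbb{I}\end{pmatrix},
\end{equation*}
where $X$ ranges over all real $2n_A\times 2n_B$ matrices and the minimum is understood in the Löwner sense. A short completion-of-squares computation shows that the argument minus $\bar V_{AB\backslash A}$ equals $(X+V_A^{-1}V_{\rm off})^T V_A (X+V_A^{-1}V_{\rm off})\geq 0$, so the minimiser is $X=-V_A^{-1}V_{\rm off}$ and the minimum value is exactly $\bar V_{AB\backslash A}$; this uses $V_A>0$, which holds for any genuine CM. Since for each fixed $X$ the argument is a \emph{linear} function of $V_{AB}$, the Schur complement is a pointwise Löwner-infimum of linear maps and hence Löwner-concave.

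Finally I would chain the three properties. For CMs $V_{AB}^{(1)},V_{AB}^{(2)}$ and $\lambda\in[0,1]$, evaluating at the convex combination and using Löwner-concavity of the Schur complement gives $\bar V_{AB\backslash A}\geq\lambda\bar V^{(1)}_{AB\backslash A}+(1-\lambda)\bar V^{(2)}_{AB\backslash A}$; monotonicity of $\log\det$ in the Löwner order then lets me replace $\bar V_{AB\backslash A}$ by the right-hand side at the cost of an inequality, and concavity of $\log\det$ (Lemma~\ref{lem1}) finishes the bound, yielding
\begin{equation*}
\log\det\bar V_{AB\backslash A}\geq\lambda\log\det\bar V^{(1)}_{AB\backslash A}+(1-\lambda)\log\det\bar V^{(2)}_{AB\backslash A},
\end{equation*}
which is the claimed concavity.

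The main obstacle is the second step, the operator concavity of the Schur complement; the rest is bookkeeping. The virtue of the variational route is that it avoids any explicit second-derivative or perturbative computation of $\bar V_{AB\backslash A}$, which would otherwise be the unpleasant part. The points needing care are verifying that the infimum is genuinely attained and equals the stated Schur complement, and confirming that a Löwner-infimum of linear maps is concave in the matrix sense—both standard, but contingent on $V_A>0$ so that the inverse and the minimiser are well defined.
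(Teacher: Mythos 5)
Your proof is correct, but it takes a genuinely different route from the paper's. The paper settles this lemma by citation: it invokes the known matrix-analysis result \cite{PetzBook} that, for positive definite matrices, $\log\det V_{(m)}-\log\det V_{(m-l)}$ is concave when $V_{(m-l)}$ is a principal submatrix of $V_{(m)}$, and then simply sets $V_{(m)}=V_{AB}$ and $V_{(m-l)}=V_A$. You instead prove the statement from first principles: the factorisation $\det V_{AB}=\det V_A\,\det\bar V_{AB\backslash A}$ turns the claim into concavity of $V_{AB}\mapsto\log\det\bar V_{AB\backslash A}$; your variational formula exhibits $\bar V_{AB\backslash A}$ as an \emph{attained} L\"owner-minimum of maps linear in $V_{AB}$, which yields operator concavity of the Schur complement (the attainment matters here, since the L\"owner order is only partial and infima of arbitrary families need not exist, but your completion of squares supplies the explicit minimiser $X=-V_A^{-1}V_{\rm off}$); monotonicity and concavity of $\log\det$ (Lemma~\ref{lem1}) then complete the chain. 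What the paper's approach buys is brevity; what yours buys is self-containedness and a strictly stronger intermediate statement --- L\"owner concavity of the Schur-complement map --- which is essentially the standard proof of the very fact the paper cites. Your route also dovetails nicely with the rest of the Letter: the same Schur complement $\bar V_{AB\backslash A}$ of Eq.~(\ref{sciuri}) and the identity $\det\bar V_{AB\backslash A}=\det V_{AB}/\det V_A$ are exactly the objects that drive the steering criterion (\ref{bonasteer}) and the monogamy result of Theorem~\ref{theosteer}.
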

\begin{proof}
Let $V_{(m)}$ be an $m \times m$ positive definite matrix, and let $V_{(m-l)}$ be the $(m-l) \times (m-l)$ matrix obtained by deleting in $V_{(m)}$ a set of $l$ chosen rows and the corresponding columns. Without loss of generality, $V_{(m-l)}$ can be taken as the leading $(m-l)$-dimensional diagonal block of $V_{(m)}$.  We have then \cite{PetzBook} that $\log \det V_{(m)} - \log\det V_{(m-l)}$ is concave over the set of all $m \times m$ positive definite matrices. Choosing now $V_{(m)}=V_{AB}$ and $V_{(m-l)} = V_A$, the claim is proven.
\end{proof}
Notice that this is true even though the difference in Lemma~\ref{lem2} can be negative, as it does happen for most cases of interest in quantum information theory (e.g.~bipartite entangled states); analogous results hold for the corresponding von Neumann entropic quantity $\mathcal{H}_{AB}-\mathcal{H}_{A}$, whose negativity has been interpreted as a resource for quantum state merging \cite{Nature}.

Equipped with these results, the proof of Theorem~\ref{T1} can now be completed.
\begin{proof}[Proof (of Theorem~\ref{T1})]
Lemma~\ref{lem2} readily implies that \begin{equation}\label{preres}
(\mathcal{M}_{AB} - \mathcal{M}_{A})+(\mathcal{M}_{BC}-\mathcal{M}_C)
\end{equation}
is concave over all tripartite CMs $V_{ABC}$. Since $\{V_{ABC}\}$ form a convex set, concavity implies that the quantity in Eq.~(\ref{preres}) achieves its minimum value at one of the extreme points of this set, i.e., on a CM $V_{ABC}$ with $\det(V_{ABC})=1$. Any such CM describes a pure Gaussian state, for which we have $\det V_{AB} = \det V_C$ and $\det V_{BC} = \det V_A$, which means that the quantity in Eq.~(\ref{preres}) evaluates to zero in any such case. This concludes the proof of the SSA inequality for the log-determinant of CMs as anticipated in Eq.~(\ref{SSALog}).
\end{proof}

We now proceed with some remarks on Theorem~\ref{T1}. First, Proposition~\ref{P1} trivially follows as a corollary of Theorem~\ref{T1}.
Next, we notice that the SSA inequality (\ref{SSALog}) is saturated not only by pure tripartite Gaussian states, but also by all states (Gaussian or not) for which $V_B$ is symplectic (i.e.~such that $V_B \sigma^{\oplus n_B} V_B^T = \sigma^{\oplus n_B}$), that is by states $\rho_{ABC} = \rho_{AC} \otimes \rho_B$ where $\rho_{AC}$ is arbitrary and $\rho_B$ is a pure Gaussian state; the CM of these states can be written in block diagonal form, $V_{ABC} = V_{AC} \oplus V_B$.

If we focus instead on the case of a differently partitioned product state $\rho_{ABC} = \rho_{AB} \otimes \rho_C$ where $\rho_C$ is now a pure Gaussian state, and we consider both the SSA inequality (\ref{SSALog}) and its variant when $A$ and $B$ are swapped, we obtain the inequality \begin{equation}\label{triangle}
\mathcal{M}_{AB} \geq |\mathcal{M}_A - \mathcal{M}_B|\,,
 \end{equation}
 which  is formally analogous to the Araki-Lieb triangle inequality  for Shannon/von Neumann entropies \cite{ArakiLieb,WehrlR}.

Finally, let us recall that, given any CM $V_{ABC}$, it can be `purified' to a symplectic (positive definite) CM $V_{ABCD}$ with $\det V_{ABCD}=1$, so that for global bipartitions of this four-partite CM one has $\det V_{AB} = \det V_{CD}$, $\det V_A = \det V_{BCD}$, and so on. Then, the inequality (\ref{SSALog}) can be recast as (finally relabelling $D$ as $C$ for aesthetic convenience)
\begin{equation}\label{SSALog2}
\mathcal{M}_{AB} + \mathcal{M}_{AC} \geq \mathcal{M}_A + \mathcal{M}_{ABC}\,,
\end{equation}
reminiscent of the celebrated counterpart of Eq.~(\ref{SSA}) for Shannon/von Neumann entropies,
 \begin{equation}\label{SSA2}
\mathcal{H}_{AB} + \mathcal{H}_{AC} \geq \mathcal{H}_A + \mathcal{H}_{ABC}\,,
\end{equation}
which is also typically referred to as SSA inequality in information theory literature. Notice that if $V_{ABC}$ is assumed to be the CM of a tripartite Gaussian state, the inequality (\ref{SSALog2}) reproduces the one demonstrated for the R\'enyi entropy of order $2$ in \cite{Renyi} (see also \cite{Gross}). We remark that in \cite{Renyi} an assumption of Gaussianity of states was made, while here an explicit (and particularly didactic) proof of the SSA for the log-determinant of CMs of arbitrary Gaussian or non-Gaussian states has been presented.

\section{Comparisons between log-determinant and von Neumann entropy}
One might wonder whether there exists a hierarchical relation between the SSA inequalities for the log-determinant $\mathcal{M}_{V_{\rho}}$, Eq.~(\ref{SSALog}), and for the von Neumann entropy $\mathcal{H}_{\rho}$, Eq.~(\ref{SSA}), in arbitrary tripartite states $\rho$. However, the two inequalities are {\it prima facie} incomparable, as it can be seen that they are saturated for different classes of states in general \cite{PetzBook}. Furthermore, while $\mathcal{M}_{V_{\rho}}$ can be computed easily for any state based on second moments, $\mathcal{H}_{\rho}$ does not admit a manageable expression in arbitrary continuous variable states, which renders the comparison even more difficult to undertake.
Nevertheless, if we focus our attention onto Gaussian states, some partial answers can be obtained.

Recall that the von Neumann entropy of an arbitrary $n$-mode Gaussian state  with CM $V$ can be computed in closed form via the expression \cite{Holevo99,Holevo01}
\begin{equation}
\label{vneuG}
{\cal H}_V = \sum_{j=1}^n \frac{\nu_j+1}{2} \log \left(\frac{\nu_j+1}{2}\right) - \frac{\nu_j-1}{2} \log \left(\frac{\nu_j-1}{2}\right)\,,
\end{equation}
where $\{\nu_j\}_{j=1}^n$ are the symplectic eigenvalues of $V$, obeying $\nu_j \geq 1$ $\forall j$ as a consequence of the {\it bona fide} condition (\ref{bonafide}). The latter quantities can be evaluated by noting that the spectrum of the matrix $(-V\sigma^{\oplus n}V\sigma^{\oplus n})$ is of the form  $\{\nu_1^2, \nu_1^2, \ldots, \nu_n^2, \nu_n^2\}$, i.e., contains the squared symplectic eigenvalues of $V$ with double degeneracies.

Exploiting a comparison between various entropies performed in \cite{extremal}, we have that the von Neumann entropy ${\cal H}_V$ admits tight lower and upper bounds as a function of the log-determinant ${\cal M}_V$, for any $n$-mode Guassian state with CM $V$, given by
\begin{equation}\label{HvsM}
f_1({\cal M}_V) \leq {\cal H}_V \leq f_n({\cal M}_V)\,,
\end{equation}
with
\begin{equation}\label{fnm}
f_n(m)=\frac{n}{2} \left[ \log\left(\frac{e^{\frac{m}{n}}-1}{4}\right)
+e^{\frac{m}{2n}} \log\left( \coth \frac{m}{4n}\right) \right]\,.
\end{equation}
For any real $m \geq 0$ and integer $n \geq 1$, the function $f_n(m)$ is monotonically increasing with both $m$ and $n$, and is concave in $m$; furthermore, since $\lim_{m\rightarrow 0^{+}}f_n(m) = 0$, it follows that $f_n(m)$ is also subadditive in $m$. Therefore, for any $x,y \geq 0$,  the following holds, \begin{equation*}f_n(x)+f_n(y) \geq f_n(x+y) \geq \left[f_n(2x)+f_n(2y)\right]/2\,.\end{equation*}

Notice that if we set $n=1$ in Eq.~(\ref{HvsM}), the upper and lower bounds coincide, meaning that the von Neumann entropy is a simple monotonic, concave, and subadditive function of the log-determinant of the CM for all single-mode Gaussian states, while for $n>1$ we can only say that ${\cal H}_V$ is constrained between two monotonic,  concave, and subadditive functions of ${\cal M}_V$, with the upper boundary becoming looser with increasing number $n$ of modes.

We can now show that the SSA inequality for the log-determinant is in fact stronger than the conventional SSA inequality for the von Neumann entropy in a relevant instance.

\begin{thm}[SSA hierarchy for pure Gaussian states] \label{MimpliesH}
 Let $V_{ABC}$ with $\det V_{ABC}=1$ denote the CM of a $(n_A+n_B+n_C)$-mode pure Gaussian state such that the reduced CM $V_A$ has symplectic spectrum $\{1,\ldots,1,\nu_{n_A}\}$.
 Then, the SSA for the log-determinant, Eq.~(\ref{SSALog2}), implies the SSA for the von Neumann entropy, Eq.~(\ref{SSA2}).
\end{thm}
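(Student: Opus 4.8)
The plan is to exploit the purity of $V_{ABC}$ together with the special spectral structure of $V_A$ in order to reduce everything to the single-mode instance of the bound in Eq.~(\ref{HvsM}), where the lower and upper bounds coincide. Since $\det V_{ABC}=1$, the state is pure, so that $\mathcal{H}_{ABC}=0$ and $\mathcal{M}_{ABC}=\log\det V_{ABC}=0$. Consequently the hypothesis, the log-determinant SSA of Eq.~(\ref{SSALog2}), collapses to $\mathcal{M}_{AB}+\mathcal{M}_{AC}\geq\mathcal{M}_A$, while the target inequality, Eq.~(\ref{SSA2}), collapses to $\mathcal{H}_{AB}+\mathcal{H}_{AC}\geq\mathcal{H}_A$. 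I would therefore aim to establish the latter from the former.

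The key observation is that the assumption on the symplectic spectrum of $V_A$ makes subsystem $A$ \emph{effectively single-mode} as far as both $\mathcal{M}_A$ and $\mathcal{H}_A$ are concerned. Writing $g(\nu)=\frac{\nu+1}{2}\log\frac{\nu+1}{2}-\frac{\nu-1}{2}\log\frac{\nu-1}{2}$ for the single-eigenvalue contribution in Eq.~(\ref{vneuG}), the symplectic eigenvalues equal to $1$ contribute $g(1)=0$ to $\mathcal{H}_A$ and $\log 1=0$ to $\mathcal{M}_A$. Hence $\mathcal{M}_A=2\log\nu_{n_A}$ and $\mathcal{H}_A=g(\nu_{n_A})$, each fixed by the single nontrivial eigenvalue $\nu_{n_A}$. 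A short computation then shows $g(\nu)=f_1(2\log\nu)$, so that $\mathcal{H}_A=f_1(\mathcal{M}_A)$ \emph{exactly}; this is precisely the statement that the bounds of Eq.~(\ref{HvsM}) coincide in the single-mode case, here inherited by $A$ even though $n_A$ may exceed one.

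With this in hand, the proof reduces to a chain of inequalities invoking only the analytic properties of $f_1$ recorded after Eq.~(\ref{fnm}). Applying the lower bound of Eq.~(\ref{HvsM}) to the subsystems $AB$ and $AC$ gives $\mathcal{H}_{AB}+\mathcal{H}_{AC}\geq f_1(\mathcal{M}_{AB})+f_1(\mathcal{M}_{AC})$. Since every CM satisfies $\det V\geq 1$, both arguments are nonnegative, so the subadditivity of $f_1$ yields $f_1(\mathcal{M}_{AB})+f_1(\mathcal{M}_{AC})\geq f_1(\mathcal{M}_{AB}+\mathcal{M}_{AC})$. Invoking now the hypothesis $\mathcal{M}_{AB}+\mathcal{M}_{AC}\geq\mathcal{M}_A$ together with the monotonicity of $f_1$, and finally the identity $f_1(\mathcal{M}_A)=\mathcal{H}_A$ established above, I obtain $\mathcal{H}_{AB}+\mathcal{H}_{AC}\geq f_1(\mathcal{M}_{AB}+\mathcal{M}_{AC})\geq f_1(\mathcal{M}_A)=\mathcal{H}_A$, which is the desired von Neumann SSA.

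The main obstacle is the middle link in this chain: making the lower bound $f_1$ \emph{tight} exactly on the term $\mathcal{M}_A$ where equality is needed. For a generic $V_A$ one would only have $\mathcal{H}_A\leq f_{n_A}(\mathcal{M}_A)$, with $f_{n_A}$ the looser upper bound, and the chain would fail to close in the required direction; it is the hypothesis on the symplectic spectrum of $V_A$ that upgrades this to the equality $\mathcal{H}_A=f_1(\mathcal{M}_A)$. Everything else is a routine assembly of the monotonicity, concavity and subadditivity of $f_1$, applied to the nonnegative arguments $\mathcal{M}_{AB}$ and $\mathcal{M}_{AC}$.
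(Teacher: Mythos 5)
Your proof is correct and follows essentially the same route as the paper's: the exact identity $\mathcal{H}_A = f_1(\mathcal{M}_A)$ forced by the spectral hypothesis on $V_A$, followed by the chain combining the monotonicity and subadditivity of $f_1$ with the lower bound of Eq.~(\ref{HvsM}). The only cosmetic difference is that the paper first uses purity to rewrite both SSA inequalities in terms of the $B$ and $C$ marginals ($\mathcal{M}_A \leq \mathcal{M}_B + \mathcal{M}_C$ and $\mathcal{H}_A \leq \mathcal{H}_B + \mathcal{H}_C$), whereas you keep the $AB$ and $AC$ marginals throughout; the two chains coincide term by term under the purity identities $\mathcal{M}_{AB}=\mathcal{M}_C$, $\mathcal{M}_{AC}=\mathcal{M}_B$, $\mathcal{H}_{AB}=\mathcal{H}_C$, $\mathcal{H}_{AC}=\mathcal{H}_B$.
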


\begin{proof}
For a pure state, ${\cal M}_{ABC}=0$ and Eq.~(\ref{SSALog2}) rewrites as ${\cal M}_A \leq {\cal M}_B +  {\cal M}_C$. If subsystem $A$ is in a Gaussian state whose  CM $V_A$ has $(n_A-1)$ symplectic eigenvalues equal to $1$ (corresponding to $n_A-1$ vacua in its normal mode decomposition), then its entropic properties are equivalent to those of a single mode with symplectic eigenvalue $\nu_{n_A}$, meaning in particular that  the von Neumann entropy of $A$ saturates the lower bound in Eq.~(\ref{HvsM}) \cite{extremal}. We have then the following chain of inequalities:
\begin{equation}
\label{makesyoustronger}
{\cal H}_A = f_1({\cal M}_A) \leq f_1 ({\cal M}_B+{\cal M}_C) \leq f_1({\cal M}_B) + f_1({\cal M}_C) \leq {\cal H}_B + {\cal H}_C\,,
\end{equation}
where we have used respectively the monotonicity of $f_1(m)$ and the SSA for log-determinant in the first inequality, the subadditivity of $f_1(m)$ in the second inequality, and the lower bound of Eq.~(\ref{HvsM}) in the third inequality. Eq.~(\ref{makesyoustronger}) yields ${\cal H}_A \leq {\cal H}_B +  {\cal H}_C$, concluding the proof.
\end{proof}

\begin{figure}[t!]
    \centering
    \includegraphics[width=8cm]{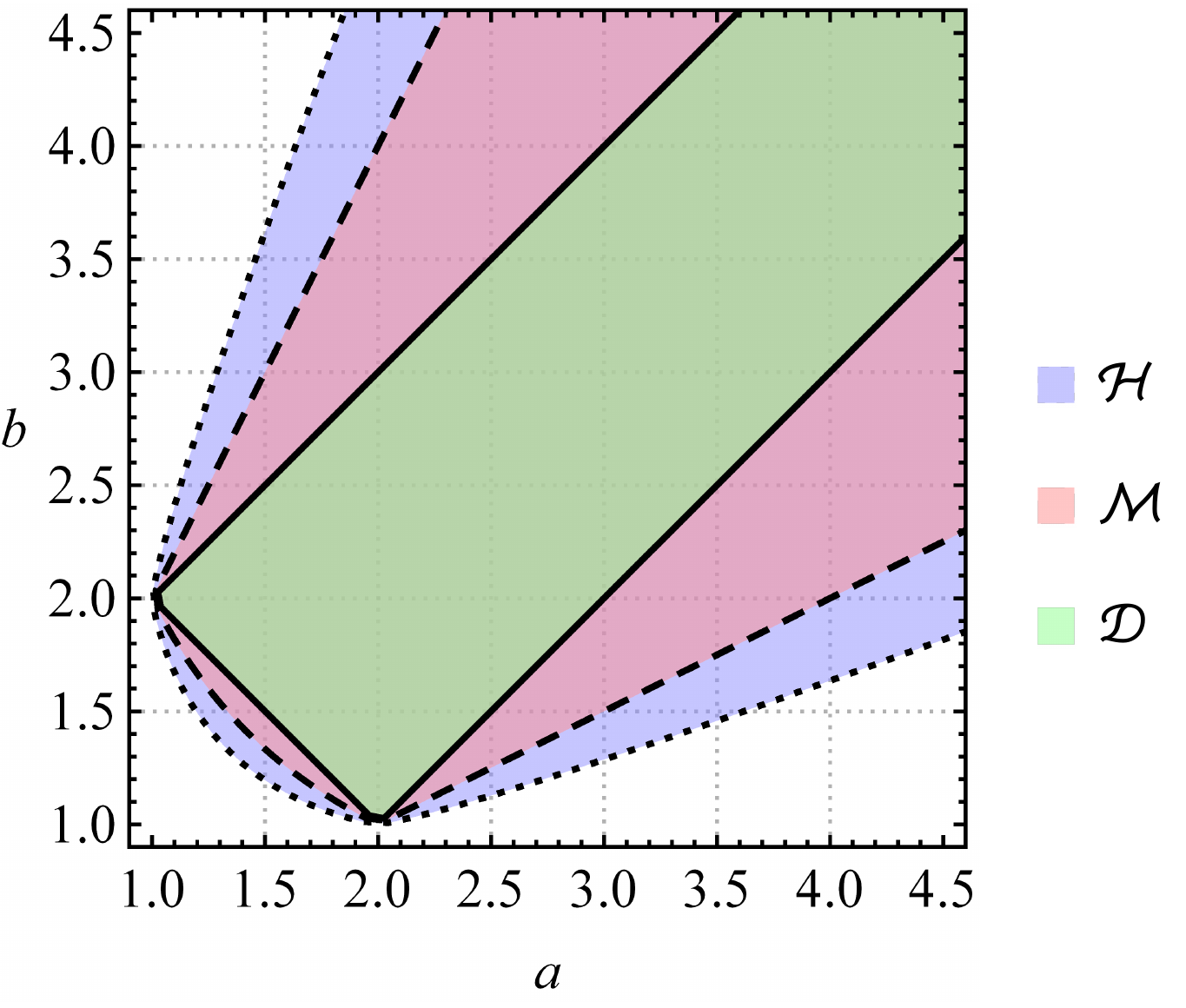}
    \caption{(Colour online) Hierarchy of SSA inequalities as formalised by Theorem~\ref{MimpliesH}. The plot shows a cross-section of the regions defined by the triangle inequality (\ref{gentriangle}) for pure three-mode Gaussian states with local symplectic invariants $\{a,b,c\}$ at fixed $c=2$, with the entropy function ${\cal S}$ corresponding, in order of increasing strength, to:  (i) the von Neumann entropy ${\cal H}$ (blue outermost region with dotted boundary); (ii) the log-determinant of the CM ${\cal M}$ (red intermediate region with dashed boundary); and (iii) the sqrt-determinant of the CM ${\cal D}$ (green innermost region with solid boundary), defined in the text. The latter inequality delimits the physical parameter space of all pure three-mode Gaussian states \cite{3mpra,Giedke2014}.}
    \label{figregia}
\end{figure}

We remark that Theorem~\ref{MimpliesH} holds in particular for all pure tripartite Gaussian states with $n_A=1$ and $n_B, n_C$ arbitrary. In order to provide a simple illustration of the Theorem, let us consider the instance of pure three-mode Gaussian states ($n_A=n_B=n_C=1$). Up to local unitaries, their CM $V_{ABC}$ is fully specified by three symplectic invariants, which can be identified with the determinants of the three reduced CMs, that is, $a=\sqrt{\det V_A}$, $b=\sqrt{\det V_B}$, and $c=\sqrt{\det V_C}$ \cite{3mpra,Giedke2014}. In this case, considering all permutations of the three modes, the SSA constraints take the form of a triangle inequality
\begin{equation}\label{gentriangle}
|{\cal S}_A-{\cal S}_B|\leq {\cal S}_C \leq {\cal S}_A+{\cal S}_B\,,
\end{equation}
with ${\cal S}\equiv {\cal H}$ for the von Neumann entropy, and ${\cal S} \equiv {\cal M}$ for the log-determinant.
In Fig.\ref{figregia} we compare the regions defined by these inequalities in the space of parameters $\{a,b,c\}$. The figure shows, as proven in Theorem~\ref{MimpliesH}, that the log-determinant SSA defines a smaller region and is thus stronger than the von Neumann SSA. However, the set of physical three-mode pure Gaussian states is delimited by an even stronger triangle inequality, obtained by setting ${\cal S} \equiv {\cal D}$ in Eq.~(\ref{gentriangle}), with the sqrt-determinant function ${\cal D}_V=\sqrt{\det V}-1$ \cite{3mpra}. The latter inequality, which can be seen as a solution to the Gaussian marginal problem for $n=3$ \cite{Tyc2008}, further incorporates the requirement that the CM $V_{ABC}$ must obey the {\it bona fide} condition (\ref{bonafide}), while the SSA inequality in the form (\ref{SSALog2}) for the log-determinant only relies on the positivity of the CM, $V_{ABC}>0$ (see also \cite{Renyi,Gross,PetzBook}),  which is weaker than Eq.~(\ref{bonafide}).

\section{Applications to EPR steering of multimode states}

In the remaining part of the Letter, we investigate applications of the SSA inequality (\ref{SSALog}) for log-determinant of CMs to characterising possibilities and limitations of EPR steering in continuous variable systems. Let us briefly introduce the necessary concepts. Steering, intended in a bipartite setting as the possibility for Alice to remotely prepare Bob's system in different states depending on her own local measurements,  is a genuine manifestation of quantum correlations that embodies the crux of the original EPR paradox \cite{EPR}, and was recognised by Schr\"odinger as evidence of the ``amazing knowledge'' allowed by quantum mechanics \cite{Schr,Schr2}.

Let  $\rho_{AB}$ be a bipartite state, and let  $a$ and $b$ be measurement operators on subsystems $A$ (operated by Alice) and $B$ (operated by Bob), with respective outcomes $\alpha$ and $\beta$. By definition \cite{Wiseman}, the state $\rho_{AB}$ is $A \to B$ steerable iff, for all pairs $a$ and $b$, the measurement statistics obeys
\begin{equation}\label{steeringdefinition}
\mbox{$p(\alpha, \beta|a,b;\rho_{AB}) \neq \sum_\lambda p_\lambda p(\alpha|a,\lambda) p(\beta|b,\tau_B^\lambda)$}\,,
\end{equation}
that is, it cannot be interpreted as arising from correlations between a random local hidden variable ($\lambda$) for Alice and a random local hidden state ($\tau_B^\lambda$) measured by Bob. Here $p_\lambda$ is a probability distribution and $p(\alpha, \beta|a,b;\rho_{AB}) = {\rm tr} [(\Pi_A^a \otimes \Pi_B^b)\rho_{AB}]$,
where $\Pi_A^{a}$ is the projector satisfying $a\Pi_A^a = \alpha \Pi_A^a$.

For a two-mode continuous variable system, a sufficient condition to detect steerability \cite{Reid} can be expressed in terms of the violation of Heisenberg-type uncertainty relations for the conditional variances corresponding to measurements of canonically conjugate operators. Let $q_B$ and $p_B$ be quadrature operators on Bob's mode, satisfying $[q_B,p_B]=i$, and define the variances that Alice deduces for Bob by linear inference based on her own measurements of a pair of uncharacterised operators $q_A$ and $p_A$, e.g.~$\Delta_{{\rm inf},A}q_B = \langle  [q_B-q_B^{\rm est}(q_A)]^2\rangle_{\rho_{AB}}$, where $q_B^{\rm est}(q_A) = g_q q_A$ for some optimised value of the linear gain coefficient $g_q$ (and similarly for $p_B$). One has then that the state $\rho_{AB}$ is $A \rightarrow B$ steerable if \cite{Reid}
\begin{equation}
E_{B|A}(\rho_{AB})= \Delta_{{\rm inf},A}q_B\  \Delta_{{\rm inf},A} p_B <1 \label{reid}
\end{equation}
The criterion in Eq.~(\ref{reid}) can detect steerability due to quadrature (Gaussian) measurements which act on second moments. If one optimises it over all possible choices of canonically conjugate pairs (i.e.~over local phase space symplectic transformations for Alice and Bob), then the minimum of $E_{A|B}$ can be expressed only in terms of the CM $V_{AB}$ of $\rho_{AB}$ \cite{Kogias,WisePRA,JOSAB}:
\begin{equation}
\min_{U_A \otimes U_B} E_{B|A}\big((U_A \otimes U_B)\rho_{AB}(U_A \otimes U_B)^{\dagger}\big) = \det \bar{V}_{AB \backslash A} = \frac{\det V_{AB}}{\det V_A}\,,
 \end{equation}
 where $\bar{V}_{AB \backslash A}$ denotes the Schur complement of $V_{A}$ in $V_{AB}$ as in Eq.~(\ref{sciuri}).  In this form, the criterion can be extended to an arbitrary number of modes: given a bipartite state $\rho_{AB}$ with CM $V_{AB}$, if the Schur complement is not itself a bona fide CM in the sense of Eq.~(\ref{bonafide}), i.e.~if
\begin{equation}\label{bonasteer}
\bar{V}_{AB \backslash A} + i \sigma^{\oplus n_B} \not\geq 0\,,
\end{equation}
then $\rho_{AB}$ is $A \rightarrow B$ steerable. Steerable states are useful resources for one-sided device-independent quantum key distribution \cite{1SQKD}, subchannel discrimination \cite{Piani}, and secure continuous variable teleportation \cite{TeleSteer}.

In the special case of bipartite Gaussian states $\rho_{AB}$, Eq.~(\ref{bonasteer}) is necessary and sufficient for steerability by Gaussian measurements \cite{Wiseman,WisePRA}. Accordingly, a quantitative measure of Gaussian steerability has been proposed for a $(n_A+n_B)$-mode bipartite Gaussian state \cite{Kogias}, defined as
\begin{equation}\label{GSAtoB}
{\cal G}^{A \to B}(V_{AB})=
\left\{
  \begin{array}{ll}
    0, &\hspace*{-2.2cm} \hbox{$\bar{\nu}^{AB\backslash A}_j \geq 1$ $\forall j=1,\ldots,n_B$\ ;} \\
    -\sum_{j:\bar{\nu}^{AB\backslash A}_j<1} \log\left(\bar{\nu}^{AB\backslash A}_j\right), & \hbox{otherwise,}
  \end{array}
\right.
\end{equation}
where $\{\bar{\nu}^{AB\backslash A}_j\}$ denote the symplectic eigenvalues of $\bar{V}_{AB \backslash A}$.
In the special case of $B$ comprising one mode only ($n_B=1$), the Schur complement $\bar{V}_{AB \backslash A}$ has only one symplectic eigenvalue $\bar{\nu}^{AB\backslash A}=\sqrt{\det{\bar{V}_{AB \backslash A}}}$, hence the above expression simplifies to
\begin{equation}\label{GSAtoB1}
\left.{\cal G}^{A \to B}(V_{AB})\right\vert_{n_B=1} = \max\left\{0,\, \frac12 (\mathcal{M}_A - \mathcal{M}_{AB})\right\}\,,
\end{equation}
where we have adopted the expression in Eq.~(\ref{logdet}) for the log-determinant. The log-determinant is therefore useful to capture the quantitative degree of steerability of mode $B$ by Gaussian measurements performed on the multimode subsystem $A$, as detectable at the level of CMs. Notice however that, very recently, examples of Gaussian states unsteerable by Gaussian measurements (i.e.~with  ${\cal G}^{A \to B}=0$) have been found, which are nonetheless steerable, according to definition (\ref{steeringdefinition}), by means of suitable non-Gaussian measurements \cite{NoGauss1,NoGauss2}.

Consider now a tripartite setting. Quite interestingly, if only quadrature (Gaussian) measurements and second moments are considered for steering detection, then a very strong limitation occurs: for an arbitrary (Gaussian or non-Gaussian) three-mode state $\rho_{ABC}$, the {\it monogamy} constraint
\begin{equation}\label{ReidMono}
E_{B|A}(\rho_{ABC})\  E_{B|C}(\rho_{ABC}) \geq 1\,,
\end{equation}
holds \cite{ReidMono,HeReid,Armstrong}. This means that it is impossible to detect any simultaneous steering of the single mode $B$ by the single modes $A$ and $C$ if using second moment criteria.

\begin{figure}[t!]
    \centering
    \includegraphics[width=8cm]{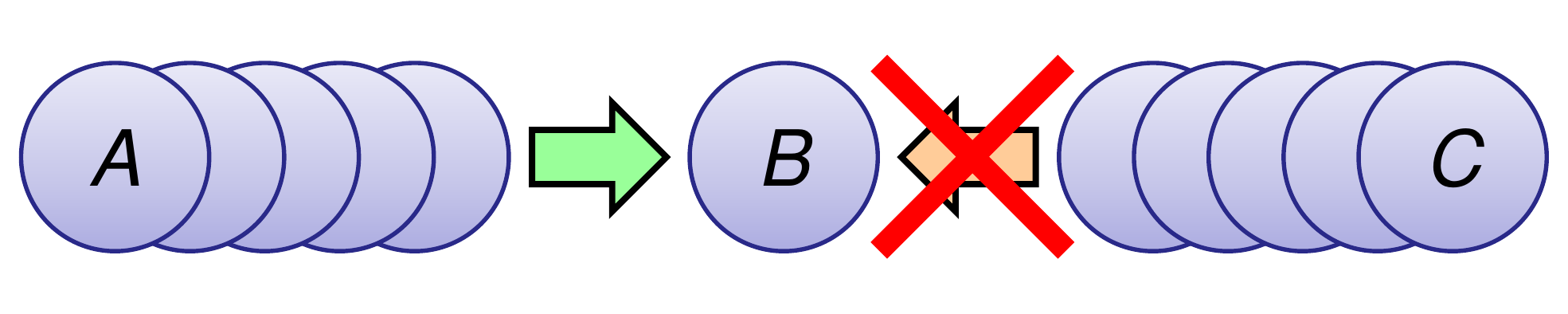}
    \caption{(Colour online) Monogamy of EPR steering by Gaussian measurements as formalised by Theorem~\ref{theosteer}. For any state $\rho_{ABC}$ of a multimode system partitioned into three subsystems $ABC$, where $A$ and $C$ are composed of an arbitrary number of modes while $B$ is composed of a single mode, if $A$ can steer $B$ by Gaussian measurements, then $C$ cannot steer $B$ by Gaussian measurements, and vice versa. This is a direct consequence of the SSA inequality for the log-determinant of the CM of $\rho_{ABC}$ presented in Theorem~\ref{T1}.}
    \label{fig2}
\end{figure}

We can now generalise this result to the case of parties $A$ and $C$ comprising an arbitrary number of modes, while the steered party $B$ remains formed by a single mode (see Fig.~\ref{fig2}).

\begin{thm}[No-joint steerability of one mode by multimode Gaussian measurements]\label{theosteer}
Let $\rho_{ABC}$ be an arbitrary $(n_A+n_B+n_C)$-mode quantum state with $n_A, n_C$ arbitrary and $n_B=1$. Then it is impossible for $\rho_{ABC}$ to be simultaneously $A \rightarrow B$ and $C \rightarrow B$ steerable by Gaussian measurements.
\end{thm}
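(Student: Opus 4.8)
The plan is to translate both steering conditions into statements about log-determinants and then invoke the SSA inequality of Theorem~\ref{T1} to rule out their simultaneous validity. The decisive simplification comes from $n_B=1$: in that case the Schur complement $\bar{V}_{AB\backslash A}$ is a single-mode ($2\times 2$) CM, whose unique symplectic eigenvalue is $\bar{\nu}^{AB\backslash A}=\sqrt{\det \bar{V}_{AB\backslash A}}$. For a single mode the bona fide condition (\ref{bonafide}) applied to the Schur complement reduces to $\det \bar{V}_{AB\backslash A}\geq 1$. Hence the steering criterion (\ref{bonasteer}) is violated---that is, $\rho_{ABC}$ is $A\to B$ steerable by Gaussian measurements---precisely when $\det \bar{V}_{AB\backslash A}<1$. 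Using the identity $\det \bar{V}_{AB\backslash A}=\det V_{AB}/\det V_A$ and taking logarithms, I would record this as the clean condition $\mathcal{M}_{AB}<\mathcal{M}_A$, which is exactly the regime in which the measure (\ref{GSAtoB1}) is strictly positive.

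First I would state both steering conditions in this form. By the argument above, $A\to B$ steerability is equivalent to $\mathcal{M}_A-\mathcal{M}_{AB}>0$; by the symmetric argument with $A$ replaced by $C$, and using $\mathcal{M}_{CB}=\mathcal{M}_{BC}$, the $C\to B$ steerability is equivalent to $\mathcal{M}_C-\mathcal{M}_{BC}>0$. Next I would argue by contradiction: assuming both forms of steering hold at once and adding the two strict inequalities yields $\mathcal{M}_{AB}+\mathcal{M}_{BC}-\mathcal{M}_A-\mathcal{M}_C<0$. This flatly contradicts the SSA inequality (\ref{SSALog}) proven in Theorem~\ref{T1}, which guarantees that this very combination is nonnegative for every tripartite CM. Therefore the two steerabilities cannot coexist, and the Theorem follows.

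I do not expect a serious obstacle: the argument is essentially immediate once the single-mode reduction is in place, and it elegantly reproduces the known three-mode monogamy (\ref{ReidMono}) while extending it to arbitrary $n_A$ and $n_C$. The one point genuinely requiring care is the equivalence between the operator inequality (\ref{bonasteer}) and the scalar condition $\det \bar{V}_{AB\backslash A}<1$. This hinges entirely on $n_B=1$, so that the Schur complement carries a single symplectic eigenvalue and the bona fide test collapses to a determinant condition expressible through $\mathcal{M}_{AB}-\mathcal{M}_A$. For $n_B>1$ the criterion would instead involve the full symplectic spectrum of $\bar{V}_{AB\backslash A}$, as in Eq.~(\ref{GSAtoB}), and the clean reduction to log-determinants---and hence this monogamy argument---would no longer apply directly, which is precisely why the hypothesis $n_B=1$ is retained.
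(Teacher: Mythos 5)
Your proof is correct and follows essentially the same route as the paper's: for $n_B=1$ the criterion (\ref{bonasteer}) collapses to the determinant conditions $\det V_{AB}/\det V_A<1$ and $\det V_{BC}/\det V_C<1$, and adding their logarithms contradicts the SSA inequality (\ref{SSALog}) of Theorem~\ref{T1}. Incidentally, your version states the contradiction correctly as $\mathcal{M}_{AB}+\mathcal{M}_{BC}-\mathcal{M}_{A}-\mathcal{M}_{C}<0$, whereas the paper's proof contains a harmless typo writing $<1$ in that step.
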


\begin{proof}
The claim follows by combining the SSA inequality (\ref{SSALog}) for log-determinant of CMs with the CM-based steering criterion (\ref{bonasteer}). Denoting by $V_{ABC}$ the CM of the composite tripartite system (with $n_B=1$), we have in fact that $\bar{V}_{AB \backslash A} + i \sigma^{\oplus n_B} \not\geq 0$ is equivalent to $\det \bar{V}_{AB \backslash A} = \det V_{AB}/\det V_A<1$, and similarly $\bar{V}_{CB \backslash C} + i \sigma^{\oplus n_B} \not\geq 0$ is equivalent to $\det \bar{V}_{CB \backslash C} = \det V_{BC}/\det V_C<1$. To accomplish simultaneous steering of mode $B$ by groups $A$ and $C$ based on second moments, one would thus need $(\det V_{AB}/\det V_A)(\det V_{BC}/\det V_C) < 1$, or equivalently, taking logarithms, $\mathcal{M}_{AB} + \mathcal{M}_{BC} - \mathcal{M}_{A} - \mathcal{M}_{C} <1$. But this is impossible as it contradicts Eq.~(\ref{SSALog}), hence concluding the proof.
\end{proof}
The SSA inequality for the log-determinant implies therefore a limitation for joint steerability based solely on CMs in arbitrary continuous variable states.  Specifically, subsystems $A$ and $C$ cannot simultaneously steer the single-mode subsystem $B$ by Gaussian measurements, although this no-go may be circumvented by non-Gaussian measurements even on tripartite Gaussian states $\rho_{ABC}$ \cite{NoGauss2}.

However, it is easy to see that, as soon as $B$ is made of at least two modes, such a strict monogamy is lifted, and $B$ can be steered by parties $A$ and $C$ simultaneously already in an all-Gaussian setting. As an example, consider a Gaussian state of four modes $1,2,3,4$, and group them such that subsystem $A$ is assigned mode $1$, subsystem $B$ is assigned modes $2$ and $3$, and subsystem $C$ is assigned mode $4$. For illustration, we can focus on the family of four-mode pure states introduced in \cite{Promis} (see Figure 1 therein), whose CM takes the form
\begin{equation}\label{s4}
V_{ABC} \equiv V_{1234} =S_{3,4}(a)S_{1,2}(a)S_{2,3}(s)S_{2,3}^T (s)S_{1,2}^T
(a)S_{3,4}^T (a)\,,
\end{equation}
where $S_{i,j}(r)$ denotes a two-mode squeezing symplectic transformation acting on modes $i$ and $j$ with real squeezing degree $r$ \cite{ournewreview}. These states are symmetric under swapping of modes $1 \leftrightarrow 4$ and $2 \leftrightarrow 3$. Therefore, their $1 \rightarrow (2,3)$ and $4 \rightarrow (2,3)$ steerability properties are the same; with respect to our grouping, modes $A$ and $C$ are thus able in principle to steer simultaneously the two-mode group $B$ by the same amount.  To see whether any such steering is possible at all, we can calculate the Gaussian steerability measure (\ref{GSAtoB}) \cite{Kogias} in the relevant settings. We find that as soon as $a,s>0$, i.e., as soon as the state is not a product state, then it is both $A \rightarrow B$ and $C \rightarrow B$ steerable by Gaussian measurements,  with its Gaussian steerability ${\cal G}^{A \to B}(V_{ABC})= {\cal G}^{C \to B}(V_{ABC})$ being a monotonically increasing function of $a$ and $s$, not reported here. This does not contradict the general SSA inequality (\ref{SSALog}), which holds with equality on this example as $\det V_{ABC}=1$.
The explanation  is that, when the steered party has more than one mode, the symplectic spectrum entering Eq.~(\ref{GSAtoB}) does not depend only on the determinant of the Schur complement, hence steerability cannot be decided solely in terms of a balance of log-determinants.

\section{Conclusions}

In this Letter we demonstrated that the log-determinant, a simple informational quantity defined on the covariance matrix of any continuous variable state, behaves as a fully fledged  entropy, obeying the fundamental strong subadditivity inequality.  In a particular class of pure tripartite Gaussian states of an arbitrary number of modes, we showed that such a constraint is stronger than the conventional strong subadditivity inequality for von Neumann entropy. It would be very interesting as a future direction to investigate whether this hierarchy between strong subadditivities holds true in general, or may be reversed on other classes of states.

Our result implies a strict limitation on the joint steerability of one quantum harmonic oscillator by two other groups of oscillators, within a steering detection setting based on second moments. This is in turn relevant for practical applications, e.g. in the context
of secure quantum communication \cite{pirandolareview}. In a typical quantum optics laboratory where operations (including malicious attacks) are limited to the Gaussian toolbox, it is impossible for a single mode in Bob's possession to be steered by more than one partner at once. Such a monogamy ensures that Bob's exclusive pairing with Alice (who can operate on multiple modes), for the purposes of entanglement verification \cite{Wiseman} and one-sided device-independent quantum key distribution \cite{1SQKD}, cannot be disrupted by the attempts of an eavesdropper Claire. It will be interesting to investigate other applications of this no-go result in the context of secure teleportation and telecloning protocols involving three or more parties \cite{TeleSteer}.

More extensions and additional strenghtenings of the strong subadditivity inequality for log-determinant of covariance matrices, taking into account physical requirements  such as the uncertainty principle, and inspired by seminal or  more modern developments in classical and quantum information theory, are certainly worthy of further investigation \cite{Gross,Lami}. In particular, we anticipate that it is possible to define a remainder term for Eq.~(\ref{SSALog2}) by means of a Gaussian recovery map \cite{Lami}, in analogy to the latest advances obtained for the strong subadditivity of von Neumann entropy by Fawzi and Renner \cite{fawzi2015quantum}. A more comprehensive characterisation of (Gaussian and non-Gaussian) states saturating the strong subadditivity inequality for log-determinat also deserves a separate study. We finally notice that other monogamy-type constraints on continuous variable steering within multipartite networks have been recently explored and reported elsewhere \cite{Yu}.

\section*{Acknowledgments}
GA is grateful to the Institute of Mathematical Sciences (Chennai, India) for the kind hospitality during completion of an early draft of this work, and acknowledges very fruitful discussions with Ioannis Kogias, Antony Lee, Yu Xiang, Qiongyi He, David Gross, Michael Walter, Andreas Winter, Christoph Hirche, and especially Ludovico Lami. GA  is supported by the European Research Council (ERC StG GQCOP, Grant No.~637352). RS is grateful to the Science and Engineering Research Board, Government of India for a {\em SERB Distinguished Fellowship} which made this work possible.

\section*{References}
	

\providecommand{\newblock}{}

\end{document}